\newtheorem{theorem}{Theorem}[section]
\newtheorem{corollary}{Corollary}[section]
\newtheorem{proposition}{Proposition}[section]
\newtheorem{remark}{Remark}[section]
\newtheorem{example}{Example}[section]
\newenvironment{proof}[1][Proof]{\textsc{#1.} }{\ \rule{0.5em}{0.5em}}
\numberwithin{equation}{section}
\def\be{\begin{equation}}
\def\ee{\end{equation}}
\def\bq{\begin{eqnarray}}
\def\eq{\end{eqnarray}}
\def\beq{\begin{eqnarray}}
\def\eeq{\end{eqnarray}}
\providecommand{\texorpdfstring}[2]{#1}
  \def\Omega{Omega}%
  \def\Pi{Pi}%
  \def\mu{mu}%
  \def\nu{nu}%
  \def\sigma{sigma}%
  \def\textendash{-}%
  \def\textemdash{-}%
  \def\Omega{Omega}%
  \def\Pi{Pi}%
\begin{document}

\title{\textsc{Cosmic acceleration as a saddle-node bifurcation: background identities and structure}}

\author{\Large{\textsc{Spiros Cotsakis}}$^{1,2}$\thanks{\texttt{skot@aegean.gr}}\\
$^{1}$Clare Hall, University of Cambridge, \\
Herschel Road, Cambridge CB3 9AL, United Kingdom\\  \\
$^{2}$Institute of Gravitation and Cosmology,  RUDN University\\
ul. Miklukho-Maklaya 6, Moscow 117198, Russia}
\date{January 2026}
\maketitle
\newpage

\begin{abstract}
\noindent We show that the late-time acceleration of the universe can be understood as a codimension-one bifurcation of the Friedmann dynamical system in the variables $(H,\Omega)$. At a critical value of the density-parameter combination, a saddle-node bifurcation occurs; beyond the saddle-node, trajectories are globally attracted to a new accelerating fixed point. We obtain a normal form and a versal unfolding for the reduced dynamics, proving robustness (structural stability) of the phenomenon and deriving the characteristic square-root splitting of the emerging equilibria. We interpret the unfolding parameter as a measure of departure from adiabaticity via a modified continuity/entropy balance, thus linking acceleration to controlled non-equilibrium evolution rather than to a cosmological constant. In particular, late-time acceleration arises without invoking a separate dark-energy fluid; it emerges from a bounded unfolding of the background flow around a saddle–node organizing center. We situate this within a broader “general-relativity landscape,” where control parameters act as moduli and branches of exact solutions appear as equilibrium loci, allowing bifurcation-theoretic tools to organize cosmological dynamics without introducing extra fields, and suggesting a coherent, bifurcation-guided cosmic history.
\end{abstract}
\newpage
\tableofcontents
\newpage

\section{Introduction}\label{sec:intro}
Modern developments in gravitational physics have revealed deep parallels between the solution spaces of general relativity and the string–theory landscape \cite{cot24c}. Recasting Einstein’s equations in the language of versal unfoldings and bifurcation theory exposes a structured moduli space of gravitational equilibria: control parameters (equation-of-state data, matter couplings, symmetry assumptions) act as moduli, while branches of exact solutions (Schwarzschild, Friedmann, Kerr, etc.) appear as equilibrium loci. Bifurcation surfaces mark the creation, merger, or annihilation of branches under parameter variations.

The resulting \emph{general-relativity landscape} admits tools from singularity and catastrophe theory (normal forms, versal unfoldings, bifurcation diagrams) that catalogue gravitational instabilities and mode interactions (saddle–node, pitchfork, Hopf, homoclinic, higher-order interactions) \cite{cot23a,cot23b,cot24a}. This suggests a moduli-space organising principle for nonlinear field theories and provides a roadmap for numerical exploration and for observational tests of novel black-hole phases or cosmological attractors.

Cosmic acceleration is a central, well-established phenomenon \cite{a1,a2}. Despite many proposals—dark energy \cite{d1,d2}, inhomogeneities \cite{i1,i2}, quintessence \cite{q1,q2}, modified gravity \cite{m1,m2}—no single explanation is definitive (see \cite{a3}). In \cite{cot23} it was shown that, for the simplest Friedmann cosmologies (without \(\Lambda\)), there exists a uniquely defined perturbation of the Friedmann equations that meets the conditions for a saddle–node bifurcation: near the Milne and flat states, the unfolded dynamics develops a pair of equilibria that organise nearby geometries.

In this paper we prove that the reduced Friedmann flow admits a versal extension with normal form \(Z'=\bar{\mu}-Z^{2}\), identify the accelerating future attractor beyond the saddle–node, and give a controlled non-equilibrium interpretation of the unfolding parameter via a modified continuity/entropy balance. The mechanism links acceleration to self-organization out of equilibrium and is robust (structurally stable) in the sense of versal unfoldings. We obtain late-time acceleration without introducing a separate dark-energy fluid; instead it emerges from a bounded unfolding of the background flow around a saddle–node organizing center.

The plan is as follows. In Section~\ref{sec:unfolding} we motivate the need for a versal deformation of the Friedmann system, give parallel dynamical and algebraic formulations, and isolate the non-hyperbolic regimes. In Section~\ref{sec:versal} we carry out the centre-manifold reduction and derive the universal unfolding with its fixed branches. In Section~\ref{sec:accel} we interpret the unfolding parameters physically via a modified continuity/entropy balance, and identify the accelerating future attractor. In Section~\ref{sec:discussion} we summarise results, note limitations, and sketch future directions, in particular, we outline observational handles to be developed in a companion paper \cite{cot26a}.

\textbf{Terminology and main results.}
We use “hyperbolic” to mean that all equilibria are hyperbolic, i.e. every eigenvalue of the linearized Jacobian at each fixed point has nonzero real part. Such systems are structurally stable in the sense that their phase portraits are preserved under small perturbations (cf.\ the Hartman–Grobman theorem).

By a \emph{versal unfolding} we mean a codimension-one deformation that captures all small perturbations of a degenerate equilibrium up to smooth changes of variables, parameters, and time. For the reduced Friedmann density equation we work with
\[
\Omega'=\Omega(\Omega-1)+\nu,\qquad \nu=\sigma/\mu,
\]
which, under the shift \(Z=\tfrac12-\Omega\) and \(\bar\mu=\tfrac14-\nu\), takes the saddle–node normal form
\[
Z'=\bar\mu - Z^2.
\]
Here \(\mu=3\gamma-2\), $\gamma$ being the fluid parameter,  and \(\sigma\) is the unfolding parameter (related to entropy production). The saddle–node point is \((Z,\bar\mu)=(0,0)\), i.e.\ \((\Omega,\nu)=(\tfrac12,\tfrac14)\).

More explicitly, our contributions are:
\begin{enumerate}
  \item Derivation of the saddle–node normal form for the reduced Friedmann flow,
  \(Z'=\bar{\mu}-Z^{2}\), together with the map \((\mu,\sigma)\mapsto(\nu,\bar{\mu})\) and the stability of the fixed branches \(\Omega_\pm\).
  \item A non-equilibrium interpretation of the unfolding via
  \(\dot{\rho}+3H(\rho+p)=3\sigma H^{3}\) and \(\dot{S}/V=(3\sigma/T)H^{3}\), including the conditions for late-time acceleration at the future attractor.
  \item A macroscopic interpretation of the non-adiabatic source term, and discuss how departures from the linearized  scaling $\rho\propto a^{-3(1+w)}$ are governed by the unfolding parameter $\nu$ leads to novel probes for non-adiabaticity.

\end{enumerate}

\section{The need for a versal extension of Friedmann cosmology}\label{sec:unfolding}
In this section we set up the Friedmann system in two parallel ways—an autonomous dynamical form in \((H,\Omega)\) and the familiar algebraic/continuity formulation—and show their equivalence in the hyperbolic regime (for background on hyperbolic equilibria and structural stability in cosmological dynamical systems, see \cite{we97}, chs.~4–5). We fix notation, derive the reduced \((H,\Omega)\) equations, and identify the flat/Milne equilibria together with the closed-model incompleteness. This makes transparent where non-hyperbolic behaviour occurs and motivates, in the next section, a versal (codimension-one) extension capable of capturing the bifurcation that underlies acceleration.

We start from the symmetry-reduced Einstein equations for a Friedmann--Robertson--Walker (FRW) metric with scale factor $a(t)$ (dot $=\mathrm{d}/\mathrm{d}t$) and a perfect fluid with equation of state $p=(\gamma-1)\rho$ (equivalently $p=w\rho$ with $w=\gamma-1$). The time--time component (Raychaudhuri),
\begin{equation}\label{ray}
3\,\ddot{a}=-4\pi G(\rho+3p)\,a,
\end{equation}
the space--space component,
\begin{equation}\label{ss}
a\ddot{a}+2\dot{a}^2+2k=4\pi G(\rho-p)a^2,
\end{equation}
and the resulting Friedmann constraint,
\begin{equation}\label{fr}
\frac{\dot{a}^2}{a^2}+\frac{k}{a^2}=\frac{8\pi G}{3}\rho,
\end{equation}
are supplemented by the continuity equation
\begin{equation}\label{cont}
\dot{\rho}+3\frac{\dot{a}}{a}(\rho+p)=0,
\end{equation}
which follows from $\nabla_\nu T^{\mu\nu}=0$.

Two standard routes exist: a \emph{dynamical-systems} formulation and an \emph{algebraic} formulation. We begin with the former.
\subsection{Dynamical systems approach}
One convenient autonomous form is
\begin{equation}\label{bas-1}
\begin{split}
\frac{\mathrm{d}\Omega}{\mathrm{d}\tau}&=-\mu\,\Omega+\mu\,\Omega^2,\\
\frac{\mathrm{d}H}{\mathrm{d}\tau}&=-H-\frac{1}{2}\mu\,\Omega\,H,
\end{split}
\end{equation}
where $H=\dot{a}/a$ is the Hubble parameter, $\Omega=\rho/(3H^2)$ is the density parameter, and $\tau$ is the dimensionless time defined by $\mathrm{d}t/\mathrm{d}\tau=1/H$ (primes below denote $\mathrm{d}/\mathrm{d}\tau$). The combination
\be\label{eq:q-eq}
q=\frac{1}{2}\,\mu\,\Omega,
\ee
is the deceleration parameter.

Equation \eqref{bas-1} is directly equivalent to \eqref{ray}, \eqref{fr}, \eqref{cont}: \eqref{bas-1}a is the continuity equation in these variables, \eqref{bas-1}b is Raychaudhuri; the Friedmann constraint is preserved along the flow. Since its precise value is not known, we treat $\mu$ as a (frozen) parameter: $\mu'=0$.

An advantage of \eqref{bas-1} is that key FRW solutions become \emph{equilibria} that organise the phase portrait. For $\mu>0$ (i.e.\ $\gamma>2/3$) the $\Omega$–equation reads $\Omega'=\mu\,\Omega(\Omega-1)$, so
\be
\Omega'>0\quad(\Omega>1),\qquad
\Omega'<0\quad(\Omega\in(0,1)).
\ee
Thus the flat equilibrium $\Omega=1$ is a repeller (past attractor), and the Milne equilibrium $\Omega=0$ is an attractor (future attractor):
\be
\lim_{\tau\to-\infty}\Omega=1,\qquad
\lim_{\tau\to+\infty}\Omega=0.
\ee
For closed models ($\Omega>1$, $k=+1$) one has
\begin{equation}\label{incom}
\lim_{\tau\to\tau_{\max}}\Omega=+\infty,
\end{equation}
i.e.\ the standard future incompleteness (maximum expansion before recollapse). Observational constraints on $(t_0,H_0,q_0,\Omega_0)$ fit naturally into this phase-space picture (see e.g.\ \cite{we97}, ch.~3; \cite{hob}, ch.~15).

Finally, from Raychaudhuri \eqref{ray} one sees immediately that for $\mu>0$ (i.e.\ $w>-1/3$) the pure FRW models without $\Lambda$ \emph{cannot accelerate}:
\be
\ddot{a}<0 \quad\Longleftrightarrow\quad \mu>0.
\ee

\subsection{Algebraic approach}

A widely used complementary route starts from the differential form of the continuity law on a comoving volume $V=2\pi^2 a^3$,
\begin{equation}\label{cont-diff}
\mathrm{d}(\rho V)=-p\,\mathrm{d}V,
\end{equation}
and, with $p=w\rho$, yields after integrating the \emph{linearised} relation
\be
(1+w)\rho\,\mathrm{d}V+V\,\mathrm{d}\rho=0,
\ee
the familiar scaling,
\begin{equation}\label{ra}
\rho=C\,a^{-3(1+w)}.
\end{equation}
Substituting \eqref{ra} into the (algebraic) Friedmann constraint and treating the Raychaudhuri equation via the deceleration $q$ leads to the ``algebraized'' FRW system used for distance–redshift and related observables (see \cite{weinberg1,hob}).

To see the parallel with the dynamical-systems result \eqref{incom}, write the Friedmann equation as (we shall be using units where $8\pi G=1$, with $k\in\{-1,0,+1\}$ for open/flat/closed),
\begin{equation}\label{a-dot}
\dot{a}=\pm\left(\frac{1}{3}\rho\,a^{2}-k\right)^{1/2},
\end{equation}
so $\dot{a}$ is real $\iff\ \rho\,a^{2}\ge 3k$ (with equality at the turning point $\dot a=0$). With \eqref{ra} (and $C>0$) this gives, for $k=+1$ and $3w+1>0$, the upper bound
\begin{equation}
a<\left(\frac{C}{3k}\right)^{1/(3w+1)},
\end{equation}
again showing future incompleteness for closed models (no such restriction for $k=-1$).\footnote{In the $(a,\rho)$ variables the vector field defined by \eqref{a-dot} is not smooth at the turning point, which is a mathematical reason to prefer formulations like \eqref{bas-1} for linearisation and stability.}

\subsection{Equivalence in the hyperbolic regime}

Both approaches above are based on linearised versions of \eqref{ray}, \eqref{fr}, \eqref{cont} and therefore agree in the \emph{hyperbolic} case, where all relevant equilibria have no centre directions. In that setting, structural stability justifies treating $\gamma$ (and hence $\mu$) as fixed.

\section{The versal deformation of the Friedmann equations}\label{sec:versal}
In this section we give a concise, self-contained derivation of a versal extension of the Friedmann equations. Our aim is twofold: (i) to place parts of the broader literature within a common dynamical-systems framework, and (ii) to prepare the ground for a bifurcation-theoretic interpretation of cosmic acceleration that naturally unifies several phenomenological models. We carry out the centre-manifold reduction near the dispersive equilibria, derive the saddle–node  normal form, and obtain the fixed branches and their stability.

\subsection{Beyond hyperbolicity: non-hyperbolic equilibria}

The linearised (algebraic) relation \eqref{ra} implicitly discards second-order terms in the product $\rho V$. While any smooth function may be truncated locally, the same is not automatically true for solutions of the coupled nonlinear FRW system. Whether a linear truncation suffices depends on intrinsic dynamical features—chiefly the nature of equilibria of \eqref{bas-1}.

If all equilibria are hyperbolic, linearisation captures the qualitative dynamics and small changes of $\gamma$ do not alter stability. The interesting case is when \eqref{bas-1} has a non-hyperbolic equilibrium $(\Omega_*,H_*,\gamma_*)$, i.e.\ the Jacobian has a zero eigenvalue. Then, for $\gamma$ near $\gamma_*$ and $(\Omega,H)$ near $(\Omega_*,H_*)$, the number and type of equilibria may change—new branches can be created or annihilated.

As shown in \cite{cot23}, in addition to the classical hyperbolic Milne $(0,0,\mu_0)$ and flat $(1,0,\mu_0)$ states (with $\mu_0$ fixed), the system \eqref{bas-1} exhibits the following non-hyperbolic equilibria:
\begin{enumerate}
\item \textbf{EQ-I (dispersive Milne):} the $\Omega$–axis $(\Omega,0,0)$ (including the origin),
\item \textbf{EQ-IIa (dispersive flat):} $(1,0,0)$,
\item \textbf{EQ-IIb:} the $H$–axis $(1,H,-2)$, with $H=0,1$ corresponding to Einstein static and de Sitter.
\end{enumerate}
At these points the Jacobian is
\[
J_{\mathrm{EQ\!-\!I}}=J_{\mathrm{EQ\!-\!IIa}}=\mathrm{diag}(0,-1),\qquad
J_{\mathrm{EQ\!-\!IIb}}=\mathrm{diag}(-2,0),
\]
i.e.\ exactly one zero and one nonzero eigenvalue. Linear stability is inconclusive; one must use centre-manifold and unfolding theory. In particular, $\gamma$ can no longer be held fixed but should be treated as a parameter, leading to a one-parameter family of vector fields.

For a self-contained treatment we therefore work within bifurcation and singularity theory (see also the general-relativity landscape programme \cite{cot24c}). Following \cite{cot23}, the versal unfolding proceeds in three steps:
\begin{enumerate}
\item centre-manifold reduction of \eqref{bas-1} near the dispersive Milne state,
\item universal (versal) deformation of the reduced system,
\item analysis of the resulting parametrised equilibria (fixed branches).
\end{enumerate}
Here ``bifurcation'' refers specifically to dynamical bifurcations of equilibria with a one-dimensional centre manifold (one zero eigenvalue).

\subsection{Dynamics on the centre manifold}

Near \textbf{EQ-I} the system can be written as
\begin{equation}\label{sys3a}
\begin{pmatrix}\Omega'\\ H'\end{pmatrix}
=
\begin{pmatrix}0&0\\ 0&-1\end{pmatrix}
\begin{pmatrix}\Omega\\ H\end{pmatrix}
+
\begin{pmatrix}-\mu\,\Omega+\mu\,\Omega^2\\ -\tfrac12\mu\,\Omega\,H\end{pmatrix},
\end{equation}
with
\begin{equation}\label{sys3b}
\mu'=0.
\end{equation}
The (parametric) centre-manifold theorem yields a local centre manifold
\begin{equation}\label{cenman}
W^c_{\mathrm{loc}}=\{(\Omega,\mu,H):\, H=h(\Omega,\mu),\ |\Omega|<\delta_1,\ |\mu|<\delta_2,\ h(0,0)=0,\ Dh(0,0)=0\},
\end{equation}
for $\delta_{1,2}$ small. Using the tangency condition $\dot H-D_\Omega h\,\dot\Omega=0$ and an ansatz
\be
h(\Omega,\mu)=a\Omega^2+b\mu\Omega+c\mu^2+O(3),
\ee
one finds $a=b=c=0$, so $W^c_{\mathrm{loc}}$ is the $\Omega$–axis and the reduced dynamics is
\begin{align}
\Omega'&=-\mu\,\Omega+\mu\,\Omega^2, \label{ooc}\\
\mu'&=0. \label{m0}
\end{align}
Here $-\mu\Omega$ is quadratic and $\mu\Omega^2$ cubic in $(\Omega,\mu)$. The origin is a degenerate equilibrium:
\begin{equation}\label{sn1}
f(0,0)=0,\qquad \partial_\Omega f(0,0)=0,\qquad
\partial_\mu f(0,0)=0,\qquad \partial_{\Omega\Omega} f(0,0)=0,
\end{equation}
so both transversality and nondegeneracy fail.

\subsection{Friedmann cosmology as organising centre}

Degenerate germs like \eqref{ooc} are treated as organising centres. The key questions are:
(i) finite determinacy—do low-order terms determine the qualitative dynamics irrespective of higher-order corrections? and
(ii) unfolding—how many auxiliary parameters suffice to capture all small perturbations?

A natural low-order perturbation is
\begin{equation}\label{un1a}
\Omega'=\mu\,\Omega(\Omega-1)+\sigma,
\end{equation}
i.e.
\be
G(\Omega,\mu,\sigma)=-\mu\Omega+\mu\Omega^2+\sigma.
\ee
For $\sigma=0$ the solution branches $\{\Omega=0\}$ and $\{\Omega=1\}$ are regular away from the origin, since
\be
\frac{\partial G}{\partial\Omega}(\Omega,\mu,0)=-\mu+\mu\Omega\neq 0
\ee
there. By the implicit-function theorem, near each regular point one can solve the equation $G(\Omega,\mu,\sigma)=0$ for $\Omega=\Omega(\sigma)$ smoothly in $\sigma$, producing the familiar parabola of a saddle–node bifurcation. Near the origin, the quadratic term $-\mu\Omega$ dominates the cubic $\mu\Omega^2$, so \eqref{un1a} is well-approximated by the hyperbola $\mu\Omega=\sigma$; at $\sigma=0$ the system is most singular. This viewpoint motivates the versal normal form developed in the next subsection.

\subsection{Normal form and versal unfolding}

We eliminate $\mu$ by rescaling time,
\begin{equation}\label{T}
T=\mu\,\tau,
\end{equation}
and combine the two parameters into the single \emph{unfolding-fluid parameter}
\begin{equation}\label{nu}
\nu=\frac{\sigma}{\mu}.
\end{equation}
The continuity equation becomes,
\begin{equation}\label{un1b}
\frac{\mathrm{d}\Omega}{\mathrm{d}T}=\Omega(\Omega-1)+\nu.
\end{equation}
Here $\nu$ (equivalently $\sigma$) plays the role of the auxiliary or \emph{unfolding} parameter, while $\mu$ (or $\gamma$) is the original distinguished parameter. The limit $\sigma=0$ (i.e.\ $\nu=0$) recovers the degenerate Friedmann system.
\begin{figure}[t]
  \centering
  \IfFileExists{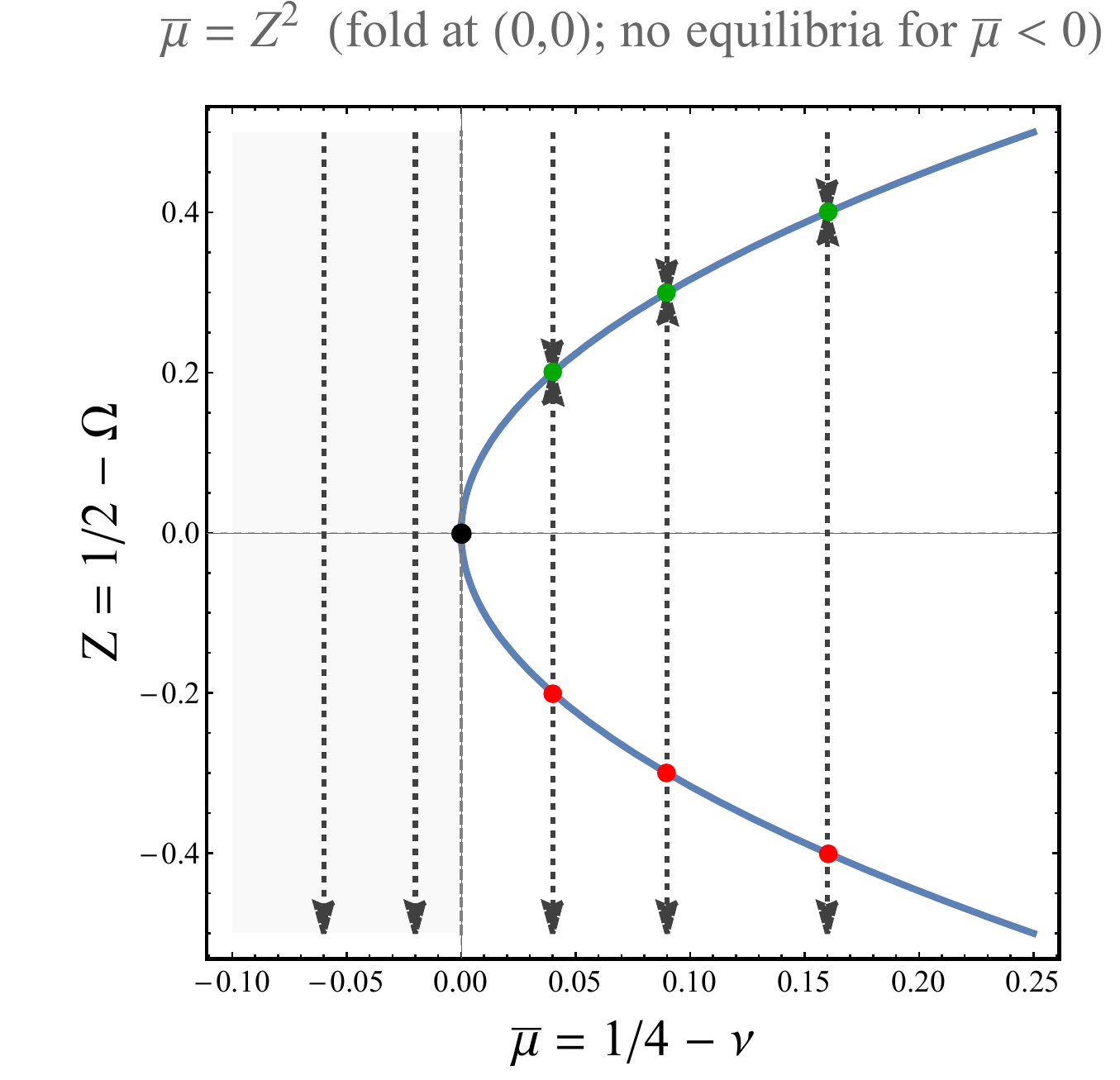}{%
    \includegraphics[width=.47\linewidth]{muBarZ-1.pdf}%
  }{%
    \fbox{\parbox[c][.32\textheight][c]{.47\linewidth}{\centering Placeholder: muBarZ-1.pdf}}%
  }\hfill
  \IfFileExists{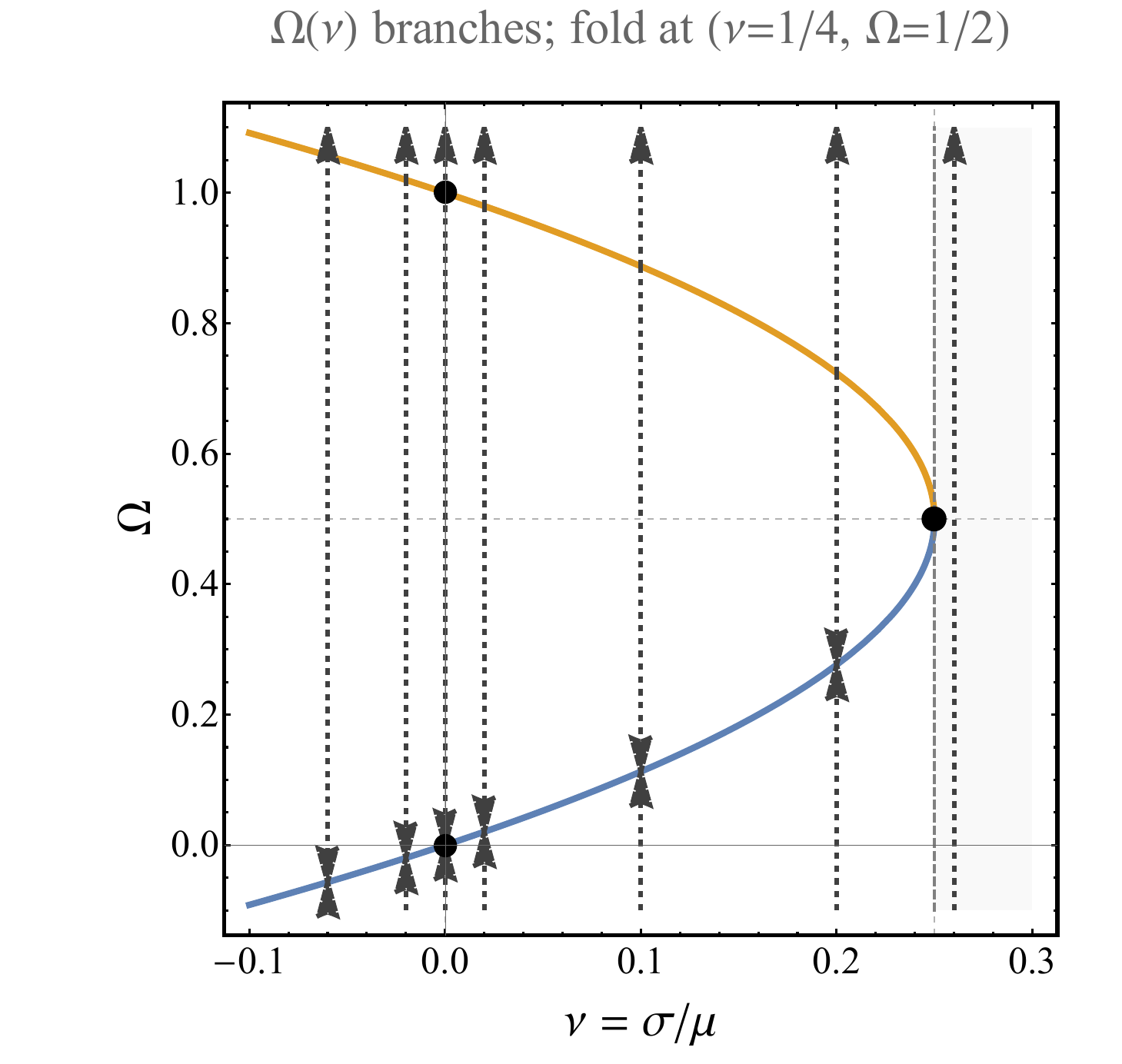}{%
    \includegraphics[width=.47\linewidth]{nuOmega.pdf}%
  }{%
    \fbox{\parbox[c][.32\textheight][c]{.47\linewidth}{\centering Placeholder: nuOmega.pdf}}%
  }
  \caption{\textbf{Saddle–node structure in shifted and normalized variables.} \emph{Left:} $(\bar{\mu},Z)$ plane with $Z$ vertical. The parabola $\bar{\mu}=Z^{2}$ opens to the right and the saddle-node is at $(0,0)$. On vertical phase lines $Z'=\bar{\mu}-Z^{2}$ the flow is down–up–down for $\bar{\mu}>0$ (stable $Z=+\sqrt{\bar{\mu}}$, unstable $Z=-\sqrt{\bar{\mu}}$); for $\bar{\mu}<0$ there are no real equilibria (shaded). \emph{Right:} $(\nu,\Omega)$ plane with $\Omega$ vertical. The branches $\Omega_{+}(\nu)=\tfrac12-\tfrac12\sqrt{1-4\nu}$ and $\Omega_{-}(\nu)=\tfrac12+\tfrac12\sqrt{1-4\nu}$ exist for $\nu<\tfrac14$ and meet at $(\nu,\Omega)=(\tfrac14,\tfrac12)$. On vertical phase lines $\Omega'=\Omega(\Omega-1)+\nu$ the flow is up–down–up for all $\nu<\tfrac14$, so $\Omega_+$ is stable and $\Omega_-$ unstable; at $\nu=1/4$ the branches coalesce; for $\nu>\tfrac14$ the flow is up everywhere. The case $\nu=0$ (i.e.\ $\bar{\mu}=\tfrac14$) recovers $\Omega=0$ and $\Omega=1$; for $0<\nu<\tfrac14$ these are replaced by $(\Omega_+,\Omega_-)$.} 
  \label{fig:SN-Z-muBar-nu-Omega}
\end{figure}
To reveal the canonical structure, introduce the shifted density
\begin{equation}\label{Z}
Z=\frac{1}{2}-\Omega,
\end{equation}
and the distance to the saddle-node,
\be
\bar{\mu}=\frac{1}{4}-\nu.
\ee
Then \eqref{un1b} takes the \emph{saddle–node normal form} \cite{cot23}
\begin{equation}\label{un3}
\frac{\mathrm{d}Z}{\mathrm{d}T}=\bar{\mu}-Z^2,
\end{equation}
with bifurcation point at $(Z,\bar{\mu})=(0,0)$; see Fig.~\ref{fig:SN-Z-muBar-nu-Omega}. In particular,
\be\label{cond}
g(Z,\bar{\mu})=\bar{\mu}-Z^2,\quad
g=0,\ \partial_Z g=-2Z=0,\ \partial_{ZZ}g=-2\neq0,\ \partial_{\bar{\mu}}g=1\neq0,
\ee
so both nondegeneracy and transversality hold.

\textbf{Codimension and the structural--instability locus.}
In the versal--unfolding framework, ``codimension'' is not meant in the purely geometric sense of an embedded submanifold in physical space.
Rather, one considers the ambient space $\mathcal{X}$ of smooth vector--field germs (modulo smooth changes of state variables), and the subset
\[
S^{c}=\{\, f\in\mathcal{X}:\ f \text{ has a nonhyperbolic equilibrium}\,\}
\]
(the structural--instability locus). Here $S\subset\mathcal{X}$ denotes the set of structurally stable germs, and we write $S^{c}$ for the corresponding structural--instability locus (the ``singular'' set), i.e.\ the set of germs that fail structural stability.

The \emph{codimension of the degeneracy} is the number of independent conditions needed to reach $S^{c}$ from a generic germ, equivalently the minimal number of control parameters in a (topologically/smoothly) versal deformation.

For a one--dimensional reduced germ $g(Z,\bar\mu)=\bar\mu-Z^{2}$, the equilibrium condition is one scalar equation $g=0$; degeneracy adds the single additional scalar constraint $\partial_Z g=0$ at the same point.
At $(Z,\bar\mu)=(0,0)$ we have $\partial_{ZZ}g\neq 0$ and $\partial_{\bar\mu}g\neq 0$ (cf.\ \eqref{cond}), so the unfolding by the single parameter $\bar\mu$ is transverse and versal.
Hence the fold (saddle--node) is of \emph{codimension one}.

\begin{theorem}[Saddle-node for the unfolded Friedmann system]\label{thm:fold-FRW}
Consider the reduced Friedmann density equation with unfolding,
\be\label{eq:Omega-unfolded}
\Omega'=\Omega(\Omega-1)+\nu,\qquad \nu=\sigma/\mu,
\ee
and its normal form in the shifted variable $Z=\tfrac12-\Omega$,
\be\label{eq:SN-normal}
Z'=\bar{\mu}-Z^2,\qquad \bar{\mu}=\tfrac14-\nu.
\ee
Then the following hold:

\begin{enumerate}
\item (\emph{Equilibria and threshold}) There exists a unique saddle-node  at $(Z,\bar{\mu})=(0,0)$, i.e.\ at $(\Omega,\nu)=(\tfrac12,\tfrac14)$. For $\bar{\mu}>0$ ($\nu<\tfrac14$) there are exactly two real equilibria
\be\label{eq:Omega-branches}
Z_{\pm}=\pm\sqrt{\bar{\mu}},\qquad
\Omega_{\pm}=\tfrac12\mp\sqrt{\bar{\mu}} \;=\; \frac12\Bigl(1\mp \sqrt{1-4\nu}\Bigr),
\ee
while for $\bar{\mu}<0$ ($\nu>\tfrac14$) there are no real equilibria.

\item (\emph{Stability}) Along the vertical phase lines of the one-dimensional flow, the lower branch $\Omega_{+}$ is asymptotically stable (future attractor) and the upper branch $\Omega_{-}$ is unstable (past attractor). Equivalently, $Z_{+}$ is stable and $Z_{-}$ unstable.

\item (\emph{Square-root splitting}) Near the threshold, the branch separation scales as
\be
\Delta \Omega \equiv \Omega_{-}-\Omega_{+} = 2\sqrt{\bar{\mu}} = \sqrt{1-4\nu},
\ee
with next correction $O(\bar{\mu}^{3/2})$ in generic perturbations.

\item (\emph{Versality and structural stability}) In a neighbourhood of the saddle-node, any sufficiently small $C^{r}$ ($r\ge 2$) perturbation of the FRW vector field that preserves the one-dimensional centre is locally $C^{1}$-equivalent (under smooth changes of state, parameter, and time) to the normal form $Z'=\bar{\mu}-Z^2$. In particular, the qualitative picture above is \emph{robust}.
\end{enumerate}
\end{theorem}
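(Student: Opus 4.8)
The plan is to verify the four assertions essentially by direct computation on the one-dimensional normal form \eqref{eq:SN-normal}, then transport everything back to the $\Omega$-coordinate via the affine change \eqref{Z}, and finally invoke the standard saddle--node (fold) theorem for part (4). First I would establish part (1): setting $g(Z,\bar\mu)=\bar\mu-Z^2=0$ gives $Z^2=\bar\mu$, so for $\bar\mu>0$ there are exactly the two roots $Z_\pm=\pm\sqrt{\bar\mu}$, for $\bar\mu=0$ the unique double root $Z=0$, and for $\bar\mu<0$ no real root; the conditions $g=\partial_Z g=0$ together single out $(Z,\bar\mu)=(0,0)$ as the unique saddle--node point, while $\partial_{ZZ}g=-2\neq0$ and $\partial_{\bar\mu}g=1\neq0$ (already recorded in \eqref{cond}) confirm it is a genuine fold. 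Undoing $Z=\tfrac12-\Omega$ and $\bar\mu=\tfrac14-\nu$ turns $Z_\pm=\pm\sqrt{\bar\mu}$ into $\Omega_\pm=\tfrac12\mp\sqrt{\bar\mu}=\tfrac12\bigl(1\mp\sqrt{1-4\nu}\,\bigr)$, and the saddle--node lands at $(\Omega,\nu)=(\tfrac12,\tfrac14)$, as claimed.

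Next, part (2) is a linearization at each branch: the one-dimensional Jacobian of $Z'=\bar\mu-Z^2$ is $\partial_Z(\bar\mu-Z^2)=-2Z$, so at $Z_+=+\sqrt{\bar\mu}$ it equals $-2\sqrt{\bar\mu}<0$ (asymptotically stable) and at $Z_-=-\sqrt{\bar\mu}$ it equals $+2\sqrt{\bar\mu}>0$ (unstable). Since $Z=\tfrac12-\Omega$ is orientation-reversing, stability in $Z$ is preserved as stability in $\Omega$ for the corresponding branch: $\Omega_+$ (the lower branch, coming from $Z_+$) is the future attractor and $\Omega_-$ the past attractor. One can equivalently read this off the sign pattern of $\Omega'=\Omega(\Omega-1)+\nu$ between and outside the two roots (the up--down--up pattern described in Figure~\ref{fig:SN-Z-muBar-nu-Omega}), which also covers the degenerate case $\bar\mu=0$ where the semistable coalesced equilibrium appears. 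Part (3) is then immediate: $\Delta\Omega=\Omega_--\Omega_+=2\sqrt{\bar\mu}=\sqrt{1-4\nu}$ exactly, and for a generic smooth perturbation $Z'=\bar\mu-Z^2+O(Z^3,\bar\mu Z)$ the implicit-function theorem applied to the perturbed zero set gives $Z_\pm=\pm\sqrt{\bar\mu}+O(\bar\mu)$ for the location and an $O(\bar\mu^{3/2})$ correction to the separation.

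For part (4) I would appeal to the classical fold/saddle--node normal-form theorem (e.g. Guckenheimer--Holmes, or Kuznetsov): a $C^r$ ($r\ge2$) one-parameter family of vector fields on a line with an equilibrium at which the value and first derivative vanish, the second derivative is nonzero, and the parameter-derivative is nonzero is locally $C^1$-conjugate, after a smooth reparametrization of state, parameter and time, to $Z'=\bar\mu-Z^2$. The nondegeneracy ($\partial_{ZZ}g\neq0$) and transversality ($\partial_{\bar\mu}g\neq0$) hypotheses were checked in \eqref{cond}; the hypothesis "preserves the one-dimensional centre" is exactly what lets the parametric centre-manifold reduction of Section~\ref{sec:versal} carry a small $C^r$ perturbation of the full FRW field \eqref{bas-1} to a small $C^r$ perturbation of the reduced germ \eqref{ooc}, to which the fold theorem then applies. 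Robustness of the qualitative picture (two hyperbolic branches for $\bar\mu>0$, none for $\bar\mu<0$, merging at $\bar\mu=0$) follows because $C^1$-equivalence preserves the number, type, and stability of equilibria. The main obstacle is bookkeeping rather than conceptual: one must be careful that the centre-manifold reduction is itself only $C^{r-1}$ (or $C^k$ for any finite $k$ in the smooth case) and that the perturbation is genuinely tangent to the centre directions, so that the reduced family stays in the hypotheses of the fold theorem; once that is granted, the cited normal-form result does all the work, and no further estimates are needed.
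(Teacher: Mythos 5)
Your proposal is correct and follows essentially the same route as the paper's own proof sketch: direct verification of the fold conditions \eqref{cond} and the explicit roots $Z_\pm=\pm\sqrt{\bar\mu}$ for items (1)--(3), with the standard saddle--node normal-form/versality theorem invoked for item (4) after the centre-manifold reduction of Section~\ref{sec:versal}. You simply make the elementary computations (linearization $-2Z$ at the branches, the $O(\bar\mu^{3/2})$ correction to the splitting) more explicit than the paper's sketch does, which is a welcome addition rather than a deviation.
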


\begin{proof}[Proof sketch]
Near the dispersive Milne state (EQ-I), the centre manifold is the $\Omega$-axis (the $H$-direction is hyperbolic). The reduced germ has the degenerate form $f(\Omega,\mu)= -\mu\,\Omega+\mu\,\Omega^2+O(3)$, which violates both transversality and nondegeneracy at the origin. Adding the unique (up to smooth equivalence) codimension-one term $\sigma$ produces the versal family $G(\Omega,\mu,\sigma)=-\mu\Omega+\mu\Omega^2+\sigma$. After the time rescaling $T=\mu\tau$ and the reparametrisation $\nu=\sigma/\mu$, one obtains the canonical normal form $Z'=\bar{\mu}-Z^2$ with $(Z,\bar{\mu})=(\tfrac12-\Omega,\tfrac14-\nu)$. Nondegeneracy ($\partial_{ZZ}g\neq 0$) and transversality ($\partial_{\bar{\mu}}g\neq 0$) hold at the saddle-node, whence items (1)--(3). Versality in item (4) follows from standard singularity/bifurcation arguments (finite determinacy and Thom transversality) applied to one-dimensional centre germs.
\end{proof}

The following consequences of Theorem \ref{thm:fold-FRW} now follow.
\begin{corollary}[Classical FRW as the singular limit]\label{cor:nu0}
At $\nu=0$ (i.e.\ $\bar{\mu}=\tfrac14$) the interior branches collapse to the classical boundary equilibria $\Omega=0$ and $\Omega=1$. For $0<\nu<\tfrac14$ the boundary points are replaced by the interior pair $(\Omega_{+},\Omega_{-})$. At $\nu=\tfrac14$ they annihilate at $\Omega=\tfrac12$. For $\nu>\tfrac14$ no equilibria remain.
\end{corollary}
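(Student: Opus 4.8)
The plan is to read the corollary off directly from part~(1) of Theorem~\ref{thm:fold-FRW}, specialising the branch formula $\Omega_{\pm}=\tfrac12\bigl(1\mp\sqrt{1-4\nu}\bigr)$ to the four parameter regimes, and then to supplement this with the elementary check that the classical boundary equilibria are genuinely destroyed (not merely joined) once $\nu>0$, so that the word ``replaced'' is justified.

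First I would recall that the equilibria of \eqref{eq:Omega-unfolded} are precisely the roots of the quadratic $\Omega^{2}-\Omega+\nu=0$, whose discriminant is $1-4\nu=4\bar{\mu}$; by Theorem~\ref{thm:fold-FRW}(1) they are real and distinct for $\nu<\tfrac14$, coincident at $\nu=\tfrac14$, and absent for $\nu>\tfrac14$. Setting $\nu=0$ gives $\sqrt{1-4\nu}=1$, hence $\Omega_{+}=0$ and $\Omega_{-}=1$, which are exactly the classical Milne and flat states recovered in the degenerate (unperturbed) limit $\sigma=0$. For $0<\nu<\tfrac14$ one has $0<\sqrt{1-4\nu}<1$, so $\Omega_{+}\in(0,\tfrac12)$ and $\Omega_{-}\in(\tfrac12,1)$, an interior pair with $\Omega_{\pm}\notin\{0,1\}$. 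At $\nu=\tfrac14$, $\sqrt{1-4\nu}=0$ and $\Omega_{+}=\Omega_{-}=\tfrac12$, i.e.\ the branches coalesce at the saddle-node point $(\Omega,\nu)=(\tfrac12,\tfrac14)$ of Theorem~\ref{thm:fold-FRW}(1); and $\nu>\tfrac14$ forces $\bar{\mu}<0$, leaving no real equilibria.

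To justify ``replaced'' rather than ``supplemented'' for $0<\nu<\tfrac14$, I would evaluate the right-hand side of \eqref{eq:Omega-unfolded} at the two classical points: at $\Omega=0$ it equals $0\cdot(0-1)+\nu=\nu$, and at $\Omega=1$ it equals $1\cdot(1-1)+\nu=\nu$, both nonzero once $\nu>0$. Hence $\Omega=0$ and $\Omega=1$ cease to be equilibria the instant the unfolding is switched on, the equilibrium set is then exactly $\{\Omega_{+},\Omega_{-}\}$, and $\Omega_{\pm}(\nu)\to(0,1)$ continuously as $\nu\to 0^{+}$ — the precise sense in which $\nu=0$ is the (singular) classical limit of the family. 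Throughout $0<\nu<\tfrac14$ the stability labels are inherited unchanged from Theorem~\ref{thm:fold-FRW}(2), so $\Omega_{+}$ remains the future attractor and $\Omega_{-}$ the past attractor.

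There is no substantive obstacle here; the only point needing a little care is presentation — to display the four cases as a single continuous deformation of the equilibrium locus $\{\Omega(\Omega-1)+\nu=0\}$, which in the $(\nu,\Omega)$ plane is the parabola $(\Omega-\tfrac12)^{2}=\tfrac14-\nu$ opening leftward with vertex at $(\nu,\Omega)=(\tfrac14,\tfrac12)$ (cf.\ Fig.~\ref{fig:SN-Z-muBar-nu-Omega}, right panel), rather than as four unrelated computations — and to state explicitly that all four conclusions are specialisations of Theorem~\ref{thm:fold-FRW} and require no new input.
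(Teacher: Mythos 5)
Your proposal is correct and follows exactly the route the paper intends: the corollary is stated as an immediate specialisation of Theorem~\ref{thm:fold-FRW}(1), obtained by evaluating $\Omega_{\pm}=\tfrac12\bigl(1\mp\sqrt{1-4\nu}\bigr)$ in the four regimes of $\nu$, and the paper supplies no further argument. Your extra check that $\Omega=0$ and $\Omega=1$ cease to be equilibria for $\nu>0$ is a harmless (and slightly more careful) addition justifying the word ``replaced.''
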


\begin{corollary}[Sign of $q$ at the versal equilibria]\label{cor:q-sign}
At an equilibrium, $q=\tfrac12\,\mu\,\Omega$. Hence:
\begin{itemize}
\item For $\mu>0$ and $\nu<0$ one has $\Omega_{+}<0$, and therefore $q<0$ (accelerating attractor) even without $\mu<0$.
\item For $\nu>0$ one has $\Omega_{+}>0$, so acceleration at $\Omega_{+}$ requires $\mu<0$.
\end{itemize}
\end{corollary}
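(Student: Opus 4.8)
The plan is to combine the algebraic identity $q=\tfrac12\,\mu\,\Omega$ with the explicit equilibrium branches furnished by Theorem~\ref{thm:fold-FRW}, and then simply read off signs. First I would record why the identity persists after the unfolding: the deformation modifies only the continuity/entropy balance (the source enters as $\dot\rho+3H(\rho+p)=3\sigma H^{3}$), whereas the Raychaudhuri equation \eqref{ray} is untouched. Hence, in units $8\pi G=1$, $\ddot a/a=-\tfrac16(\rho+3p)=-\tfrac16\mu\rho$, and dividing by $H^{2}$ gives $q=-\ddot a/(aH^{2})=\tfrac12\mu\,(\rho/3H^{2})=\tfrac12\mu\,\Omega$ at every point of the flow — in particular at the versal equilibria $\Omega_{\pm}$. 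This invariance of the deceleration relation is really the only conceptual step; the remainder is sign bookkeeping.

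Next I would evaluate $q$ on the stable (future-attracting) branch $\Omega_{+}(\nu)=\tfrac12\bigl(1-\sqrt{1-4\nu}\bigr)$, which by Theorem~\ref{thm:fold-FRW} exists for $\nu<\tfrac14$. For $\nu<0$ we have $1-4\nu>1$, so $\sqrt{1-4\nu}>1$ and therefore $\Omega_{+}<0$; multiplying by $\tfrac12\mu>0$ yields $q=\tfrac12\mu\,\Omega_{+}<0$, an accelerating attractor even though $\mu>0$ (equivalently $w>-\tfrac13$). For $0<\nu<\tfrac14$ instead $0<\sqrt{1-4\nu}<1$, so $\Omega_{+}\in(0,\tfrac12)$ and $\mathrm{sign}\,q=\mathrm{sign}\,\mu$; hence $q<0$ at $\Omega_{+}$ forces $\mu<0$ (equivalently $w<-\tfrac13$). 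The boundary case $\nu=0$ gives $\Omega_{+}=0$ and $q=0$, consistently with Corollary~\ref{cor:nu0} (collapse to the Milne equilibrium). If desired I would also note the monotonicity $\partial_{\nu}\Omega_{+}=1/\sqrt{1-4\nu}>0$, showing that $q$ along the attractor increases with the non-adiabaticity parameter $\nu$ and crosses zero exactly at $\nu=0$.

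I do not anticipate a genuine obstacle here: the entire content is that the sign of the attractor's density parameter flips as $\nu$ passes through $0$, and the square-root formula for $\Omega_{+}$ makes this transparent. The single point that merits careful statement in the writeup is the one flagged above — that the deformations under consideration leave the Raychaudhuri equation (and hence $q=\tfrac12\mu\,\Omega$) intact — since the physical punchline, acceleration without demanding $\mu<0$, rests entirely on it.
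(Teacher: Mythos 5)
Your proposal is correct and follows essentially the same route the paper intends: the corollary is presented as an immediate consequence of Theorem~\ref{thm:fold-FRW} via the identity $q=\tfrac12\mu\,\Omega$ and sign-reading on $\Omega_{+}=\tfrac12\bigl(1-\sqrt{1-4\nu}\bigr)$, which is precisely your argument. Your extra remark that the unfolding leaves the Raychaudhuri equation (hence the $q$--$\Omega$ relation) untouched is a worthwhile clarification but not a departure from the paper's approach.
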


\begin{remark}[Model independence near the saddle-node]
Items \ref{thm:fold-FRW}(2)--(4) are invariant under smooth redefinitions of $(\Omega,\nu)$ and smooth time rescalings. Thus the phase portrait and the square-root law are properties of the \emph{class} of unfolded FRW systems, not of a particular coordinatisation.
\end{remark}

These computations settle finite determinacy near the saddle-node: in a neighbourhood of $(Z,\bar{\mu})=(0,0)$, any system whose germ reduces to
\be
f(\Omega,\mu)=a_0\mu+a_1\Omega^2+a_2\mu\Omega+a_3\mu^2+O(3),
\ee
(i.e.\ any sufficiently small perturbation of the Friedmann equations) is \emph{qualitatively} equivalent to \eqref{un3} (equivalently, to $\mu\,\Omega(\Omega-1)+\sigma$) under smooth changes of state, parameter and time. Moreover, the unfolded family \eqref{un3} has \emph{codimension one}: precisely one auxiliary parameter is needed. Hence the deformation is \emph{generic}, and by Thom's transversality theorem the unfolding is \emph{versal}.

\section{Cosmic acceleration at the saddle--node threshold}\label{sec:accel}

In this section we interpret the unfolding parameters in physical terms (via a modified continuity/entropy balance), identify the accelerating future attractor beyond the saddle-node, and note novel observational identities that connect \((q_{0},\Omega_{0},H_{0})\) to \((\sigma_{0},\nu_{0},\bar\mu_{0})\).

\subsection{Unfolding parameter and entropy production}\label{subsec:entropy}

The versal deformation implies a modified continuity equation
\be\label{eq:mod-cont}
\dot\rho+3H(\rho+p)=3\sigma H^{3},
\ee
which reduces to the adiabatic law when \(\sigma=0\). Using the energy-volume definition, \(E=\rho V\), \(V=2\pi^{2}a^{3}\), and \(dE=T\,dS - p\,dV\), one obtains the entropy balance,
\be\label{eq:entropy-balance}
\frac{\dot S}{V}=\frac{3\sigma}{T}\,H^{3}.
\ee
Thus \(\sigma\) parametrizes departures from adiabaticity: for an expanding universe (\(H>0\)), \(\sigma>0\) gives entropy production \((\dot S>0)\) and \(\sigma<0\) entropy decrease.

It is also convenient to package \eqref{eq:mod-cont} as a standard-looking continuity law with an effective equation-of-state parameter \(w_{\rm eff}\):
\be\label{eq:w-eff}
\dot\rho+3H(1+w_{\rm eff})\rho=0,
\qquad
w_{\rm eff}=w-\frac{\sigma}{3\Omega},
\qquad \Omega=\frac{\rho}{3H^{2}}.
\ee
This identity will be useful below.

\subsection{\texorpdfstring{The accelerating future attractor $\Omega_{+}$}
                         {The accelerating future attractor Omega+}}
\label{subsec:acc-attractor}
Crossing the saddle-node point corresponds to \(\bar\mu=0\), i.e.\ \(\nu=\tfrac14\), equivalently,
\be\label{eq:fold-line}
\sigma=\frac{\mu}{4}.
\ee
In the regime \(\nu<\tfrac14\) one has the pair \eqref{eq:Omega-branches} with \(\Omega_{+}\) future-stable and \(\Omega_{-}\) past-stable. The following gives the acceleration condition on the \(\Omega_{+}\) branch. By \eqref{eq:Omega-branches}, \(\Omega_{+}\le \tfrac12\) with equality only at the saddle-node point, and  moreover, \(\Omega_{+}<0\) iff \(\nu<0\). Then the following Proposition follows from \eqref{eq:q-eq} and \eqref{eq:entropy-balance}.

\begin{proposition}[Acceleration beyond the saddle-node point]\label{prop:acc}
Let \(\mu=3\gamma-2\) be fixed and consider the unfolded density dynamics \eqref{eq:Omega-unfolded} with \(\nu<\tfrac14\). On the future attractor \(\Omega_{+}\), the deceleration parameter \eqref{eq:q-eq} is
\be
q_{+}=\frac{1}{2}\mu\,\Omega_{+}.
\ee
Hence:
\begin{enumerate}\setlength\itemsep{2pt}
\item If \(\mu>0\) and \(\nu<0\) (so \(\sigma/\mu<0\)), then \(\Omega_{+}<0\) and \(q_{+}<0\) (late-time acceleration). In this case \(\sigma<0\) and \(\dot S<0\) by \eqref{eq:entropy-balance}.
\item If \(\mu<0\) and \(0<\nu<\tfrac14\), then \(\Omega_{+}>0\) and again \(q_{+}<0\). Here \(\sigma>0\) and \(\dot S>0\).
\end{enumerate}
\end{proposition}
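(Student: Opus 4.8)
The statement is essentially an algebraic reading-off from Theorem~\ref{thm:fold-FRW}(1) together with the identities~\eqref{eq:q-eq} and~\eqref{eq:entropy-balance}, so the plan is short. First I would observe that $q=\tfrac12\mu\Omega$ is an identity along the flow, hence holds in particular at any equilibrium; evaluating it at the fixed point $\Omega=\Omega_{+}$ of~\eqref{eq:Omega-unfolded} (which exists because $\nu<\tfrac14$) gives $q_{+}=\tfrac12\mu\Omega_{+}$ with no further argument. Next I would extract the sign of $\Omega_{+}$ from the closed form $\Omega_{+}=\tfrac12\bigl(1-\sqrt{1-4\nu}\bigr)$: since $\sqrt{1-4\nu}>1\iff\nu<0$, one has $\Omega_{+}<0$ for $\nu<0$, $\Omega_{+}=0$ at $\nu=0$, and $0<\Omega_{+}<\tfrac12$ for $0<\nu<\tfrac14$. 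Combining this trichotomy with the sign of $\mu$ settles the acceleration claim in both items at once: in case~(1) $\mu>0$ and $\Omega_{+}<0$, in case~(2) $\mu<0$ and $\Omega_{+}>0$, so $q_{+}=\tfrac12\mu\Omega_{+}<0$ either way.

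For the entropy clauses I would substitute $\sigma=\nu\mu$ (from $\nu=\sigma/\mu$) into~\eqref{eq:entropy-balance} and use $H>0$, $T>0$, $V>0$ for an expanding universe, so that $\dot S$ has the same sign as $\sigma$. In case~(1), $\nu<0<\mu$ gives $\sigma<0$, hence $\dot S<0$, exactly as stated, and this half of the proposition is then complete.

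The one genuinely delicate point is the sign of $\sigma$ (and therefore of $\dot S$) in case~(2): with $\nu>0$ and $\mu<0$, the relation $\sigma=\nu\mu$ forces $\sigma<0$, so the constraint $0<\nu<\tfrac14$ corresponds to $\sigma\in(\mu/4,0)$ rather than to $\sigma>0$, and~\eqref{eq:entropy-balance} then yields $\dot S<0$ for $H>0$. I would therefore take care to state the entropy inequality in case~(2) consistently with $\sigma=\nu\mu$, and---relatedly---to confirm that, when $\mu<0$, $\Omega_{+}$ is still the genuine future attractor in cosmic time $t$ after one undoes the time rescaling $T=\mu\tau$ (which reverses orientation when $\mu<0$). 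Neither subtlety affects the acceleration conclusion $q_{+}<0$, which is the main physical content, but both should be pinned down before the entropy clauses are regarded as final.
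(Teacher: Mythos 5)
Your argument is correct and is essentially the paper's own: the paper dispatches the proposition in one line by noting that \(\Omega_{+}\le\tfrac12\) with \(\Omega_{+}<0\) iff \(\nu<0\) and then invoking \eqref{eq:q-eq} and \eqref{eq:entropy-balance}, and your sign trichotomy for \(\Omega_{+}=\tfrac12\bigl(1-\sqrt{1-4\nu}\bigr)\) together with the product-of-signs reading of \(q_{+}=\tfrac12\mu\,\Omega_{+}\) is exactly that computation spelled out. The ``delicate point'' you isolate in case~(2) is not a quibble but a genuine inconsistency in the statement as printed: with \(\nu=\sigma/\mu\) from \eqref{nu}, the hypotheses \(\mu<0\) and \(0<\nu<\tfrac14\) force \(\sigma=\nu\mu<0\), hence \(\dot S<0\) by \eqref{eq:entropy-balance} for \(H>0\) --- the opposite of the ``\(\sigma>0\) and \(\dot S>0\)'' asserted in item~(2) (and repeated in the Discussion, where this regime is presented as an entropy-producing early phase). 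Your second caveat is equally real: the stability assignment in Theorem~\ref{thm:fold-FRW}(2) refers to the rescaled time \(T=\mu\tau\); linearizing the un-rescaled flow \eqref{un1a} at \(\Omega_{+}\) gives eigenvalue \(\mu(2\Omega_{+}-1)=-\mu\sqrt{1-4\nu}\), which is positive when \(\mu<0\), so in cosmic time \(\Omega_{+}\) is a repeller rather than the future attractor in case~(2). Neither issue touches item~(1) or the acceleration inequality \(q_{+}<0\) in either case, but you are right that the entropy and attractor clauses of item~(2) cannot stand as written without either flipping the stated signs or redefining the unfolding so that \(\sigma\) and \(\nu\) share a sign.
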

Introducing redshift dependence, we set \(a=(1+z)^{-1}\) and the \emph{normalized expansion rate} \(E(z)=H(z)/H_{0}\) (so that $E(0)=1$). The \emph{transition redshift} \(z_t\) is defined by \(q(z_t)=0\) (onset of acceleration).
There are various further interesting forms that the  basic system \eqref{bas-1} may take,  identities for the background quantities  $(H(z),\Omega(z))$ or  $(E(z),\Omega(z))$.
\begin{proposition}[Background identities]\label{prop:qOmega}
The Raychaudhuri equation \eqref{bas-1}b reads,
\be\label{ray-z}
q(z)= -1+(1+z)\,\frac{dH/dz}{H(z)}
\ee
or, in terms of the function \(E(z)\), this becomes,
\be \label{E-z}
(1+z)\,\frac{dE}{dz}\;=\; E(z)(1+q).
\ee
Further the $\Omega$-equation \eqref{bas-1}a becomes,
\be
(1+z)\,\frac{d\Omega}{dz}\;=\;-\left( \Omega(\Omega-1) +\nu\right).
\ee
\end{proposition}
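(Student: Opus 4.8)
All three identities are pure changes of the independent variable --- from the dimensionless time $\tau$ (equivalently its $\mu$-rescaling) to redshift $z$ --- applied to the autonomous system \eqref{bas-1} together with the unfolded $\Omega$-equation \eqref{eq:Omega-unfolded}. The plan is to first record the change-of-variable dictionary, then read off each identity by a one-line chain-rule substitution. For the dictionary: by construction $\mathrm{d}t/\mathrm{d}\tau=1/H$ and $H=\dot a/a$, so $\mathrm{d}(\ln a)/\mathrm{d}\tau=1$; hence $\mathrm{d}\tau=\mathrm{d}\ln a$, and since $a=(1+z)^{-1}$, i.e. $\ln a=-\ln(1+z)$, one obtains the operator identity $\mathrm{d}/\mathrm{d}\tau=-(1+z)\,\mathrm{d}/\mathrm{d}z$. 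This is the only analytic ingredient.

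Next I would establish \eqref{ray-z}. By \eqref{bas-1}b and \eqref{eq:q-eq} one has $H'=-H\bigl(1+\tfrac12\mu\Omega\bigr)=-H(1+q)$; applying the operator identity to $H$ turns this into $-(1+z)\,\mathrm{d}H/\mathrm{d}z=-H(1+q)$, and solving for $q$ gives exactly \eqref{ray-z}. Then \eqref{E-z} follows immediately: dividing the equivalent form $(1+z)\,\mathrm{d}H/\mathrm{d}z=H(1+q)$ by the constant $H_{0}$ and using $E=H/H_{0}$, so that $\mathrm{d}E/\mathrm{d}z=H_{0}^{-1}\,\mathrm{d}H/\mathrm{d}z$, yields $(1+z)\,\mathrm{d}E/\mathrm{d}z=E(z)(1+q)$.

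Finally, for the $\Omega$-identity I would apply the same operator to $\Omega$, using the $\Omega$-equation in the unfolded form \eqref{eq:Omega-unfolded} read in this same clock, $\Omega'=\Omega(\Omega-1)+\nu$; this gives at once $-(1+z)\,\mathrm{d}\Omega/\mathrm{d}z=\Omega(\Omega-1)+\nu$, as claimed. The only point that requires care --- and the one I would flag at the outset --- is the time-normalisation bookkeeping: one must be explicit that the operator dictionary refers to the clock with $\mathrm{d}\tau=\mathrm{d}\ln a$ and that \eqref{eq:Omega-unfolded} is read in that same clock, so that the $H$-equation retains the clean factor $(1+q)$ while the $\Omega$-equation carries no spurious $\mu$; the sign stemming from $a=(1+z)^{-1}$ is the other bookkeeping item. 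Beyond this there is no substantive obstacle --- the argument is a single chain-rule step for each of the three identities.
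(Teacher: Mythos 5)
Your proposal is correct and is essentially the derivation the paper itself relies on (implicitly, since no proof is printed): the single chain-rule dictionary $\mathrm{d}/\mathrm{d}\tau=\mathrm{d}/\mathrm{d}\ln a=-(1+z)\,\mathrm{d}/\mathrm{d}z$ applied to \eqref{bas-1} and the unfolded $\Omega$-equation, exactly as the paper later records in \eqref{appB:OmegaPrime}--\eqref{appB:EPrime}. You also rightly flag the one genuine subtlety --- that the stated $\Omega$-identity only comes out free of a factor of $\mu$ if $\Omega'=\Omega(\Omega-1)+\nu$ is read in the $\mathrm{d}\ln a$ clock rather than the $T=\mu\tau$ clock of \eqref{un1b} --- a bookkeeping point the paper glosses over.
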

\noindent
From the background identity $q(z)=\tfrac{\mu}{2}\,\Omega(z)$ it follows that
\(
q(z)<0 \;\Longleftrightarrow\; \mu\,\Omega(z)<0.
\)
With $\mu>0$ (ordinary matter; e.g.\ dust has $\mu=1$), acceleration therefore lies on the $\Omega(z)<0$ branch generated by the bounded unfolding. No separate adiabatic dark–energy component with $w<-1/3$ is invoked: in this precise sense our mechanism is \emph{not} dark energy, but a structural consequence of the organizing center and its bounded versal unfolding.

Also we can rephrase the attractor properties in terms of $z$: at a fixed point $\Omega_\star$ of $(1+z)\Omega'=-\bigl(\Omega(\Omega-1)+\nu\bigr)$, the $z$-flow linearization is $\delta\Omega' = -\,(2\Omega_\star-1)\,\delta\Omega/(1+z)$, so the branch with $\Omega_\star<\tfrac12$ is attractive (forward in $z$) while $\Omega_\star>\tfrac12$ is repulsive. Next, we give two results about asymptotics and age:
For the high--redshift regime we choose the decelerating branch: with $\mu>0$ and $\Omega(z)>0$ at $z\gg1$, the identity
$q(z)=\tfrac{\mu}{2}\,\Omega(z)$ implies $q(z)>0$, i.e.\ deceleration at early times.
Moreover, the present age is finite:
\begin{equation}
t_0 \;=\; H_0^{-1}\int_{0}^{\infty}\frac{dz}{(1+z)\,E(z)} \;<\;\infty,
\end{equation}
because the background relation $(1+z)\,E'/E = 1 + \tfrac{\mu}{2}\,\Omega(z)$ with bounded, positive $\Omega(z)$ at $z\gg1$ yields
$E(z)\sim (1+z)^{\,1+\frac{\mu}{2}\Omega_\infty}$ (for some $\Omega_\infty>0$), so the integrand scales as $(1+z)^{-2-\frac{\mu}{2}\Omega_\infty}$ and is integrable at infinity.

To see the practical importance of these forms, we derive below two more results, one for the observational determination of $E(z)$ in terms of $\Omega_0,q_0$, and another in terms of $H_0,q_0$, and the present jerk $j_0$ (present curvature of $E(z)$, i.e., $d^2E/dz^2|_0$). Both of these results are used in an essential way in our companion paper \cite{cot26a}, where we map a choice of unfolding history $\nu(z)=\sigma(z)/\mu$ to the observable expansion $E(z)\equiv H(z)/H_0$, which in turn constrains $\bar\mu(z)=\tfrac14-\nu(z)$. We write derivatives with respect to $N\equiv\ln a$ (so $X'\equiv \mathrm{d}X/\mathrm{d}N$), with $a=(1+z)^{-1}$, and use the unfolded continuity equation and Raychaudhuri one to obtain the background system,
\begin{align}
\Omega' &= \Omega(\Omega-1)+\nu(N),
&& \nu(N)\equiv\frac{\sigma(N)}{\mu},
\label{appB:OmegaPrime}\\
\frac{\mathrm{d}\ln E}{\mathrm{d}N} &= -\Bigl(1+\frac{\mu}{2}\,\Omega\Bigr),
&&\Longleftrightarrow\quad
(1+z)\,\frac{1}{E}\frac{\mathrm{d}E}{\mathrm{d}z}
= 1+\frac{\mu}{2}\,\Omega(z).
\label{appB:EPrime}
\end{align}

Throughout we assume the unfolding control $\nu(z)$ is a bounded, smooth function of redshift (equivalently of the scale factor $a=(1+z)^{-1}$),
$\nu:\ [0,\infty)\to\mathbb{R},\, \nu\in C^1,$ with $|\nu(z)|<\infty,\, \forall z\ge 0$.
This guarantees a well-behaved background map (no blow-ups in $E(z)$, finite age integral, and a decelerating high-$z$ branch consistent with our identities).
We do not fix a specific parametric form for $\nu$ here; concrete, smooth bounded choices and their impact on fits are developed in the observational companion paper \cite{cot26a}.

Initial conditions at $z=0$ are $E(0)=1$ and $\Omega(0)=\Omega_0$; if a present deceleration $q_0$ is adopted, one may set
\be
\Omega_0=\frac{2q_0}{\mu}.
\ee
Near the saddle-node point, the distance parameter is $\bar\mu(z)=\tfrac14-\nu(z)$. Thus, \emph{choosing a parameterisation for $\nu(z)$ (equivalently $\bar\mu(z)$) and integrating \eqref{appB:OmegaPrime}–\eqref{appB:EPrime} yields a predictive $E(z)$ to confront with low-redshift data}, with the early-time prior $\sigma(z_{\mathrm{rec}})\simeq 0$ enforcing the adiabatic regime. By finite determinacy and versality (Remark~3.1), the germ-level predictions are invariant under smooth reparametrisations; observational analyses may therefore adopt any bounded, smooth parameterisation of $\nu(z)$ and fit its (few) coefficients to data.

The second result introduces exploits the idea of an ``optional low-$z$ anchor'', by which  we mean using today’s jerk $j_{0}$ to match the second derivative of $E(z)$ at $z=0$ (with $E''(0)=j_{0}-q_{0}^{2}$), alongside $H_{0}$ and $q_{0}$ which fix the value and slope. This tightens the background fit without introducing new dynamics.
In our background system,
\be
q(z)=\frac{\mu}{2}\,\Omega(z),\qquad
j(z)=q(z)+2q(z)^2+(1+z)\,\frac{{\rm d}q}{{\rm d}z},
\ee
with
\be
(1+z)\,\frac{{\rm d}\Omega}{{\rm d}z}
= -\big[\Omega(\Omega-1)+\nu(z)\big],
\qquad
(1+z)\,\frac{{\rm d}q}{{\rm d}z}
= -\frac{\mu}{2}\big[\Omega(\Omega-1)+\nu(z)\big].
\ee
Evaluated at $z=0$ this gives
\be
j_0 \;=\; q_0 + 2q_0^2 \;-\; \frac{\mu}{2}\,\Big[\Omega_0(\Omega_0-1)+\nu_0\Big],
\qquad \big(q_0=\tfrac{\mu}{2}\Omega_0\big),
\ee
and the identities
\be
\left.\frac{1}{E}\frac{{\rm d}E}{{\rm d}z}\right|_{0}=1+q_0,
\qquad
\left.\frac{{\rm d}^2E}{{\rm d}z^2}\right|_{0}=j_0-q_0^2,
\ee
so that we arrive at the final result,
\be
E(z)=1+(1+q_0)z+\tfrac12\,(j_0-q_0^2)\,z^2+O(z^3).
\ee
Including $j_0$ as a low-$z$ anchor means the fit to $\nu(z)$ (equivalently $\bar\mu(z)$) also matches this second-order behaviour near $z=0$.

\subsection{Macroscopic representations of the source term}\label{subsec:effective}

The non-adiabatic source \(3\sigma H^{3}\) in \eqref{eq:mod-cont} admits several \emph{macroscopic representations}. These are \emph{algebraically equivalent} at the homogeneous background level—they reproduce the same \(\rho(a)\) once the identifications below are made—but they encode different constitutive physics and generally lead to \emph{inequivalent} predictions once transport/perturbations are specified.

\paragraph{(i) Bulk viscosity / creation pressure.}
We package non-adiabatic effects via an effective pressure
\begin{equation}\label{eq:pEffDef}
p_{\mathrm{eff}} \equiv p - \Pi,
\end{equation}
so that the continuity equation reads
\begin{equation}\label{eq:Pi}
\dot{\rho}+3H\big(\rho+p_{\mathrm{eff}}\big)=0.
\end{equation}
In our unfolding background we encode the source by
\begin{equation}\label{eq:Qsigma}
\dot{\rho}+3H(\rho+p)=3\,\sigma\,H^{3},
\end{equation}
hence, for $H>0$,
\begin{equation}\label{eq:PiModel}
\Pi=\sigma\,H^{2}.
\end{equation}
(When the generic source $\mathcal Q$ is introduced later, \eqref{eq:Qsigma} corresponds to $\mathcal Q(t)=3\sigma H^3$.)
Then, for the $\sigma<0$, $\mu>0$ regime in the flat case, with $H^{2}=\rho/3$, one has,
\[
\Pi=\sigma H^{2}=\tfrac{\sigma}{3}\,\rho, \qquad p=w\rho=\tfrac{\mu-1}{3}\,\rho.
\]
Hence,
\[
\frac{|\Pi|}{p}=\frac{(\!-\sigma/3)\rho}{\bigl((\mu-1)/3\bigr)\rho}
=\frac{|\sigma|}{\mu-1}\quad(\mu\neq 1),
\]
and for $\sigma<0$,
\(
p_{\rm eff}=p-\Pi>p,
\)
i.e.\ the effective pressure is \emph{larger} than the bare $p$.
For dust ($\mu=1$) one has $p=0$ and $\Pi/\rho=\sigma/3$, so $p_{\rm eff}=-\Pi>0$ if $\sigma<0$.
We do not impose a global sign on $\Pi$ here; the comparative thresholds remain
\(
p_{\rm eff}<0 \iff \Pi>p,\,\,
p_{\rm eff}=0 \iff \Pi=p,\,\,
p_{\rm eff}>0 \iff \Pi<p.
\)
Then for compatibility with the accelerating branch, we note that the effective–pressure bookkeeping \eqref{eq:pEffDef} with $\Pi$ defined by \eqref{eq:PiModel} does not determine acceleration; it merely repackages the source term. Our accelerating solutions use the background criterion established above (with $\mu>0$ on the unfolding branch $\Omega(z)<0$). Hence both signs of $\sigma$ (and thus $\Pi=\sigma H^{2}$) are compatible with acceleration. In particular, for $\sigma<0$ one has $p_{\rm eff}=p-\Pi>p$ (dust: $p=0\Rightarrow p_{\rm eff}>0$), yet the $\Omega(z)<0$ branch still accelerates by the background criterion.

\paragraph{(ii) Particle creation in an open system.}
Allow a homogeneous creation rate \(\Gamma\) via
\be\label{eq:Gamma}
\dot\rho+3H(\rho+p)=(\rho+p)\,\Gamma,
\qquad
\Gamma=\frac{3\sigma H^{3}}{\rho+p}.
\ee
Then \(\sigma>0\Rightarrow\Gamma>0\) (net creation, \(\dot S>0\)), and \(\sigma<0\Rightarrow\Gamma<0\) (net annihilation/order, \(\dot S<0\)).

\paragraph{\texorpdfstring{(iii) Bookkeeping via an entropy-sector fraction $\Omega_\sigma$.}
                          {(iii) Bookkeeping via an entropy-sector fraction Omega_sigma.}}
Represent the source as a homogeneous sector exchanging energy with the baseline fluid:
\begin{align}
\dot\rho+3H(1+w)\rho&=-Q, &
\dot\rho_\sigma+3H(1+w_\sigma)\rho_\sigma&=+Q,
\end{align}
with exchange \(Q\equiv-3\sigma H^{3}\) and \(\Omega_\sigma\equiv\rho_\sigma/(3H^2)\). The total then obeys the standard continuity law with the effective equation of state \(w_{\rm eff}=w-\sigma/(3\Omega)\) in \eqref{eq:w-eff}. Two closures are  especially transparent:
\begin{itemize}
\item \emph{Creation–pressure limit} ($w_\sigma\simeq -1$): the sourced component is vacuum–like. For $Q>0$ the vacuum–like density grows and drives $w_{\rm eff}$ more negative, favouring acceleration at the background level.
\item \emph{Tracking limit} ($w_\sigma=w$): the sourced component mimics the primary fluid. Then
$\dot\rho_{\rm tot}+3H(1+w)\rho_{\rm tot}=0$ and $w_{\rm eff}=w$, so the background expansion $H(z)$ is unchanged; the source only reshuffles $\Omega$ vs.\ $\Omega_\sigma$.
\end{itemize}
We will use these limits as convenient end–members when interpreting the unfolding source at the background level.

\begin{remark}[Scope of equivalence]
At the \emph{background} level—and with \(\Omega_\sigma\) kept homogeneous—the three representations are dynamically equivalent once one identifies
\(
\Pi=-\sigma H^{2},\
\Gamma=3\sigma H^{3}/(\rho+p),\
Q=-3\sigma H^{3}.
\)
They are \emph{not} physically identical: bulk-viscous fluids entail transport coefficients and relaxation times (Eckart/Israel–Stewart), creation models specify microphysics of \(\Gamma\), and two-fluid exchanges require choices of \(w_\sigma\) and sound speed. Differences become observable once perturbations/transport are included. See, e.g., \cite{Eckart,IsraelStewart,Maartens,Prigogine1988,CalvaoLimaWaga,Lima1996,Zimdahl}. At linear order in perturbations these representations generically differ through their sound speeds, viscosities, relaxation times, and source clustering; hence observational constraints on growth can, in principle, discriminate between them.

\end{remark}
\begin{example}[Constant $\nu$]\label{ex:const-nu}
Let $\nu=\mathrm{const}$, $\Delta\equiv\sqrt{1-4\nu}$, and $\Omega_\pm=\tfrac12(1\mp\Delta)$.
From \eqref{appB:OmegaPrime} one has
$R(N)\equiv\frac{\Omega-\Omega_+}{\Omega-\Omega_-}=R_0\,e^{-\Delta(N-N_0)}$, with
$R_0=\frac{\Omega_0-\Omega_+}{\Omega_0-\Omega_-}$.
Using \eqref{appB:EPrime} and setting $N_0=0$ ($a_0=1$), the expansion reads
\be
E(a)=\mathrm{const}\times a^{-\left[\,1+\frac{\mu}{4}(1-\Delta)\right]}\,\bigl(1-C\,a^{-\Delta}\bigr)^{\mu/2},
\qquad C\equiv R_0.
\ee
\emph{Remark.} Typically $0<C<1$ when the initial state lies between the branches ($\Omega_+<\Omega_0<\Omega_-$); other choices of $\Omega_0$ are allowed (then $C$ may be $\le0$ or $>1$), with sign absorbed into the overall constant.

\emph{Relaxation.} The approach to the stable branch $\Omega_+$ is exponential in e-folds with rate $\Delta$ (since $f'(\Omega_+)=2\Omega_+-1=-\Delta$): $N_{\rm relax}\!\simeq\!1/\Delta$, $N_{1/2}\!=\!\ln2/\Delta$; thus $\nu=0$ sets the benchmark ($\Delta=1$), $\nu\to\tfrac14^-$ slows ($\Delta\to0^+$), and $\nu<0$ speeds up ($\Delta>1$).

\emph{Asymptotics.} As $a\to\infty$, $(1-C\,a^{-\Delta})^{\mu/2}\to1$ and we find the \emph{Versal–Milne law},
\[
E(a)\sim \mathrm{const}\times a^{-\left[\,1+\frac{\mu}{4}(1-\Delta)\right]},
\]
i.e. a Milne baseline $a^{-1}$ tilted by the versal control. For $\nu=0$ ($\Delta=1$) this gives $E\sim a^{-1}$, the Milne-like late-time behaviour corresponding to the \emph{stable} branch $\Omega_+=0$. Recovering the flat single-fluid FRW power law $E\sim a^{-(1+\mu/2)}$ at $\nu=0$ requires sitting \emph{exactly} on the \emph{unstable} branch $\Omega_-\equiv1$; any perturbation relaxes to Milne and restores $E\sim a^{-1}$.

\emph{Observable ratio (choosing Versal–Milne as the late-time standard).}
Normalising at $z=0$ ($E(1)=1$) and taking $E_{\rm std}(a)=a^{-1}$,
\be
\frac{E_{\rm unfold}(a)}{E_{\rm std}(a)}
= a^{-\frac{\mu}{4}(1-\Delta)}\left[\frac{1-C\,a^{-\Delta}}{1-C}\right]^{\mu/2}.
\ee
Illustration (dust, $\mu=1$): with $\nu=-0.02$ ($\Delta\!\approx\!1.039$) and $C=0.03$,
$E(z{=}1)/E_{\rm std}\!\approx\!0.977$ and $E(z{=}0.5)/E_{\rm std}\!\approx\!0.988$;
for $\nu=-0.05$ ($\Delta\!\approx\!1.095$) and the same $C$,
$E(z{=}1)/E_{\rm std}\!\approx\!0.966$ and $E(z{=}0.5)/E_{\rm std}\!\approx\!0.982$.
Thus the unfolding predicts $\sim\!1$--$3\%$ lower $H$ around $z\!\sim\!0.5$--$1$ for modest negative $\nu$ (stronger for larger $|\nu|$ or larger $C$).

\emph{Why $\rho(a)$ probes non-adiabaticity.} Since $\rho(a)=3H_0^2\,E(a)^2\,\Omega(a)$ and both $E(a)$ and $\Omega(a)$ acquire $\nu$-dependent, non-power-law factors even for constant $\nu$, $\rho(a)$ deviates from the linearised adiabatic scaling $\rho\propto a^{-3(1+w)}$. Departures of $\log\rho$ vs.\ $\log a$ from a straight line thus provide a direct, background-level probe of non-adiabaticity.
\end{example}

While a constant $\nu$ can reproduce percent-level features of $H(z)$ and cleanly illustrates the mechanism, it is not a free-for-all: the background identities and the high-$z$ adiabatic prior $\nu\!\to\!0$ strongly restrict admissible histories. In the companion observational paper \cite{cot26a} we therefore adopt bounded, smooth forms with few parameters, and use $(H_0,q_0,j_0)$ together with $H(z)$ to pin them down.

\section{Discussion}\label{sec:discussion}

%\subsection{Summary of the main results}
We have shown that (i) the FRW system admits a codimension-one versal unfolding whose reduced normal form is the saddle–node \eqref{un3}; (ii) the unfolding parameter $\nu=\sigma/\mu$ controls the creation/annihilation of equilibria $\Omega_\pm$ given by \eqref{eq:Omega-branches}, with $\Omega_+$ future-stable and $\Omega_-$ past-stable; (iii) $\sigma$ has a clear thermodynamic meaning via the first-law balance \eqref{eq:entropy-balance}, where \(3\sigma H^{3}\) sources \(T\dot S\), and it induces a clean background-level shift of the equation of state through \eqref{eq:w-eff}, \(w_{\rm eff}=w-\sigma/(3\Omega)\)
; and (iv) for $\mu>0$ the future attractor $\Omega_+$ leads to acceleration when $\nu<0$ (hence $\sigma<0$), offering a purely dynamical route to late acceleration without a cosmological constant.

%\subsection{Limitations}
Our analysis is restricted to the \emph{background} (homogeneous/isotropic) dynamics and to a \emph{neighbourhood of the saddle-node point} in parameter space.
By ``background'' we mean that we do not model linear or non-linear perturbations (clustering, lensing, or CMB anisotropies); by ``near the saddle-node'' we mean that we truncate to the terms that organise the saddle–node normal form, so higher-order couplings and additional degrees of freedom are neglected away from the saddle-node point.
Finally, the results are sensitive to adding extra degrees of freedom (cosmological constant, scalar fields, higher-order curvature), which enlarge the organising centre and can introduce additional bifurcations beyond the saddle-node.

%\subsection{Future directions and broader implications}
The explanation advanced here is independent of a cosmological constant in the sense that $\nu$ is unrelated to $\Lambda$ and the governing equations are inequivalent to, and simpler than, the Friedmann–Lema\^itre system. Including $\Lambda$ unfolds the de Sitter and Einstein-static equilibria (impossible in the pure Friedmann case), and leads to higher bifurcations and mode interactions \cite{cot24a}. Thus while the present model is structurally simple, a full bifurcation analysis of extensions (with $\Lambda$, scalar fields, or higher-curvature terms) can reveal additional organising modes.

Because acceleration hinges on $\nu\neq 0$, it correlates with nonzero and evolving entropy through \eqref{eq:entropy-balance}–\eqref{eq:w-eff}, i.e.\ an out-of-equilibrium evolution governed by the modified continuity law. This is a manifestation of a broader \emph{gravitational self-organisation}: parameter-dependent creation of new structures (the equilibria $\pm\sqrt{\bar{\mu}}$) as a direct consequence of unfolding the Milne and flat equilibria. The system tends toward a precarious state that disappears as parameters vary: the $\Omega_\pm$ exist only for $\bar{\mu}>0$ (i.e.\ $\nu<1/4$), collide at $\bar{\mu}=0$, and are absent for $\bar{\mu}<0$. Nevertheless, the global organisation by the saddle–node \eqref{un3} (or \eqref{un1b}) is structurally stable, i.e.\ persists under small perturbations of the variables.

This interpretation is reminiscent of self-organised criticality in complex systems \cite{bak1,bak,bar07}, with a square-root scaling law for the branch splitting, $\Delta\Omega\propto\sqrt{\bar{\mu}}$ (cf.\ \cite{cot23a}, Sec.~8.2.2). The gravitational context suggests a broader universality of self-organisation via bifurcations in general relativity.
%\paragraph{Scope and outlook.}

The unfolded Friedmann normal form $\Omega'=\Omega(\Omega-1)+\nu$ (with $Z=\tfrac{1}{2}-\Omega$ and $\bar\mu=\tfrac{1}{4}-\nu$) suggests a global scenario if the control $\nu(a)=\sigma(a)/\mu(a)$ drifts monotonically from $\nu>\tfrac{1}{4}$ to $\nu<0$. For $\nu>\tfrac{1}{4}$ (pre-birth) no equilibria exist and the flow is monotone (Fig.~1). At $\nu=\tfrac{1}{4}$ a saddle--node is born at $\Omega=\tfrac{1}{2}$, with $d\Omega_\pm/d\nu\to\pm\infty$. For $0<\nu<\tfrac{1}{4}$ the pair $\Omega_\pm=\tfrac{1}{2}\mp\sqrt{\bar\mu}$ emerges; choosing $\mu<0$ yields $q_+=\tfrac{\mu}{2}\Omega_+<0$ (acceleration) and $\sigma>0$ (entropy production), providing an inflation-like, SEC-violating early phase within the same versal family. The adiabatic touchpoint $\nu=0$ recovers the classical $\Omega=0,1$ equilibria. Subsequently, for $\nu<0$ with $\mu>0$, the stable branch moves to $\Omega_+<0$, again giving $q_+<0$ (late acceleration) but now with $\sigma<0$. The “entropy decrease’’ here refers to the fluid bookkeeping ($\dot S/V=(3\sigma/T)H^3$); a generalized entropy including horizon/gravitational sectors can still increase. This scenario is structurally natural near the saddle--node and falsifiable through the sign pattern $\mathrm{sgn}(\mu\,\Omega)$, high-$z$ deceleration, and finite age.

The square-root splitting $\Delta\Omega=2\sqrt{\bar\mu}$ quantifies the geometric distance between equilibria in the unfolded phase portrait, but it does not determine proper time. Cosmological duration along a history depends on the control trajectory $\nu(a)$ and on the relation $dt/d\tau=1/H$ (with $T=\mu\tau$). In particular, at $\nu=0$ the endpoints $\Omega=0,1$ are equilibria, so the approach times diverge in the reduced flow; hence no direct identification between $\Delta\Omega$ and lifetime is warranted.

%\paragraph{Constant vs.\ bounded unfolding.}
At the background level, a strictly constant control $\nu$ already captures the essential saddle–node geometry and yields closed forms for $\Omega(a)$ and $E(a)$ with percent-level deviations from the adiabatic scaling. However, matching \emph{all} low-$z$ anchors simultaneously (e.g., $H_0,q_0,j_0$) and the full $H(z)$ curve typically prefers a \emph{bounded, smooth} history $\nu(z)$ that tends to $0$ at high redshift (adiabatic limit). In practice, we adopt minimal priors (boundedness, $C^1$ in $a$, $\nu(z\!\gg\!1)\!\to\!0$) and parameterize, e.g., $\nu(a)=\nu_0+\nu_1\,g(a)$ with the canonical bounded shape $g(a)=3a^2-2a^3$. This retains the codimension-one mechanism while allowing the small departures from adiabaticity that the data may require \cite{cot26a}.
%\paragraph{Pre–transition bottleneck (“ghost”) prior.}

In the saddle--node picture developed here, the center--manifold reduction yields the germ
\(
Z'=\bar\mu-Z^{2},\quad \bar\mu=\tfrac14-\nu,
\)
so that the relaxation span near the fold scales as \(N_{\rm relax}\propto 1/\Delta\) with \(\Delta=\sqrt{1-4\nu(z_t)}\).
Because \(\mathrm{d}t=\mathrm{d}N/H\) and \(H\) varies slowly across the bottleneck, the look--back time through a narrow pre--transition window \([z_t-\delta,\,z_t]\) is strictly larger than in any smooth comparator calibrated to the same \((H_0,z_t)\):
\(
\Delta t_{\rm ghost}\sim\frac{C}{\Delta\,H_t}>0,\;
H_t:=H(z_t),\; C=\mathcal O(1).
\)
Two ancillary kinematical consequences follow: (i) an \emph{asymmetric} passage of \(q(z)\) through zero (slow approach to \(q=0\), faster departure), and (ii) a small positive offset between the inflection redshift \(z_{\rm inf}\) defined by \(E''(z_{\rm inf})=0\) and the acceleration onset \(z_t\) defined by \(q(z_t)=0\).
These provide a compact, falsifiable discriminator of the mechanism that can be checked with CC reconstructions of \(H(z)\) in a windowed analysis, without committing to a global fit.\footnote{CC denotes \emph{cosmic chronometer} determinations of \(H(z)\) from differential galaxy ages. A practical choice is a symmetric window \([z_t-\delta,z_t+\delta]\) with \(\delta\simeq0.1\) and residuals relative to a \(\Lambda\)CDM comparator matched at \((H_0,z_t)\) (or, alternatively, at \((H_0,q_0)\)).}

%\paragraph{Falsifiability and observational handles.}
The framework is tightly testable using only background identities: (i) \emph{sign test} $\operatorname{sgn}(\mu\,\Omega)$—for $\mu>0$ late acceleration requires $\nu<0$ on the $\Omega_+$ branch; (ii) \emph{transition width}—the relaxation rate $\Delta(z)=\sqrt{1-4\nu(z)}$ controls the sharpness of the decel$\to$accel transition in e-folds; (iii) \emph{asymptotic slope}—as $a\to\infty$, $E(a)\!\sim\!\text{const}\times a^{-\left[1+\frac{\mu}{4}\big(1-\Delta_\infty\big)\right]}$, so the exponent shift $\delta p=\frac{\mu}{4}(1-\Delta_\infty)$ is directly testable; (iv) \emph{direct reconstruction}—from $E,E',E''$ one obtains $\nu(z)$ at low $z$ (optionally anchored by $q_0,j_0$). These observables constrain $(\nu_0,\nu_1,\ldots)$ without invoking a cosmological constant; the companion paper develops the corresponding fits and priors \cite{cot26a}.
\addcontentsline{toc}{section}{Acknowledgments}
\section*{Acknowledgments}
This research  was funded by RUDN University,  scientific project number FSSF-2023-0003.

\addcontentsline{toc}{section}{References}

\end{document}